\algorithmic\endcsname{\labelwidth 0.5em}{\labelwidth0pt\labelsep0pt}{}{}
\newtheorem{lemma}{Lemma}
\newtheorem*{theorem}{Theorem}
\def\n{{(n)}}
\def\m{{(m)}}
\def\a{{(a)}}
\def\E{\mathds{E}}
\def\S{\mathcal{S}}
\def\var{\operatorname{var}}
\def\ESS{\text{ESS}}
\title{Particle filter with rejection control and unbiased estimator of the marginal likelihood}
\name{
    Jan Kudlicka$^\star$ \qquad Lawrence M.~Murray$^\dagger$ \qquad Thomas B.~Sch\"on$^\star$ \qquad Fredrik Lindsten$^\ddagger$
    \thanks{This research was supported by the Swedish Foundation for Strategic Research via the project ASSEMBLE (contract number: RIT15-0012) and by the Swedish Research Council grants 2013-4853 and 2017-03807.}
}
\address{
    $^\star$ Department of Information Technology, Uppsala University, Uppsala, Sweden \\
    $^\dagger$ Uber AI, San Francisco, CA, USA \\
    $^\ddagger$ Division of Statistics and Machine Learning, Link\"oping University, Link\"oping, Sweden
}
\begin{document}

\onecolumn

\section*{IEEE Copyright Notice}

\copyright\ 2019 IEEE. Personal use of this material is permitted. Permission from IEEE must be obtained for all
other uses, in any current or future media, including reprinting/republishing this material for advertising
or promotional purposes, creating new collective works, for resale or redistribution to servers or lists, or
reuse of any copyrighted component of this work in other works.

\subsection*{Accepted to be published in}

Proceedings of ICASSP 2020, IEEE International Conference on Acoustics, Speech and Signal Processing, Barcelona, Spain, May 4--8, 2020.

\vspace{1pc}\noindent
Please cite this version:

\begin{verbatim}
@inproceedings{,
  title={Particle filter with rejection control and unbiased estimator of
         the marginal likelihood},
  author={Kudlicka, Jan and Murray, Lawrence M. and Sch\"on, Thomas B. and
          Lindsten, Fredrik},
  booktitle={{IEEE} International Conference on Acoustics, Speech and Signal Processing,
             {ICASSP} 2020, Barcelona, Spain, May 4-8, 2020},
  publisher={{IEEE}},
  year={2020}
}
\end{verbatim}

\noindent
Note that the accepted version does not include the appendices that are included in the extended version (this document).

\clearpage
\twocolumn

\maketitle

\begin{abstract}
We consider the combined use of resampling and partial rejection control in sequential Monte Carlo methods, also known as particle filters. While the variance reducing properties of rejection control are known, there has not been (to the best of our knowledge) any work on unbiased estimation of the marginal likelihood (also known as the model evidence or the normalizing constant) in this type of particle filter. Being able to estimate the marginal likelihood without bias is highly relevant for model comparison, computation of interpretable and reliable confidence intervals, and in \emph{exact approximation} methods, such as particle Markov chain Monte Carlo. In the paper we present a particle filter with rejection control that enables unbiased estimation of the marginal likelihood.
\end{abstract}
\begin{keywords}
Particle filters, sequential Monte Carlo (SMC), partial rejection control, unbiased estimate of the marginal likelihood
\end{keywords}

\section{INTRODUCTION}

Rejuvenation of particles in methods based on sequential importance sampling is a crucial step to avoid the weight degeneracy problem. Sequential Monte Carlo (SMC) methods typically use \emph{resampling}, but there are alternative methods available. \emph{Rejection control}, proposed by Liu et al. \cite{liu1998rejection} solves the degeneracy problem by checking the weights of particles (or \emph{streams} in their terminology) at given checkpoints, and comparing them to given thresholds. Particles with weights below a certain checkpoint threshold are probabilistically discarded and replaced by new particles that are restarted from the beginning. Discarding particles that have passed through all previous checkpoints is quite disadvantageous. Liu proposed a modified version of the algorithm in \cite{liu2008monte}, called \emph{partial rejection control}, where a set of particles is propagated in parallel between the checkpoints, and each rejected particle gets replaced by a sample drawn from the particle set at the previous checkpoint (quite similar to resampling), and propagated forward, rather than restarting from the beginning.

Peters et al. \cite{peters2012sequential} combined partial rejection control and resampling: the resampling, propagation and weighting steps are the same as in standard SMC methods, but an additional step is performed after the weighting step. Here, the weight of each particle is compared to a threshold and if it falls below this threshold, the particle is probabilistically rejected, and the resampling, propagation and resampling steps are repeated. This procedure is repeated until the particle gets accepted. Peters et al. also adapted the algorithm to be used in an approximate Bayesian computation (ABC) setting.

We consider models with likelihoods and our contribution is a non-trivial modification of the particle filter with rejection control, allowing us to define an unbiased estimator of the marginal likelihood. This modification is very simple to implement: it requires an additional particle and counting the number of propagations. The unbiasedness of the marginal likelihood estimator opens for using particle filters with rejection control in exact approximate methods such as particle marginal Metropolis-Hastings (PMMH, \cite{AndrieuDH:2010}), model comparison, and computation of interpretable and reliable confidence intervals.

\section{BACKGROUND}

\subsection{State space model}

\emph{State space models} are frequently used to model dynamical systems where the state evolution exhibits the Markov property (i.e., the state at time $t$ only depends on the state at time $t-1$ but not on the state at any earlier time). Further, the state is not observed directly, but rather via measurements depending (stochastically) only on the state at the same time.

Let $x_t$ denote the state at time $t$ and $y_t$ the corresponding measurement. The state space model can be represented using probability distributions:
\begin{equation*}
    x_0 \sim \mu_0(\cdot), \hspace{1.8pc} x_t \sim f_{t}(\cdot | x_{t-1}), \hspace{1.8pc} y_t \sim g_{t}(\cdot | x_t).
\end{equation*}

The inference goal is to estimate posterior distributions of (a subset of) the state variables given a set of measurements, and to estimate the expected value of test functions with respect to these distributions. We are usually interested in the joint filtering distribution $p(x_{0:t}|y_{1:t})$ and the filtering distribution $p(x_t|y_{1:t})$, where $x_{0:t}$ denotes the sequence of all states until time $t$, i.e.~$x_0, x_1, \dots, x_t$, and similarly for $y_{1:t}$.

\subsection{Particle filters}

Particle filters are sampling-based methods that construct sets $\mathcal{S}_t$ of $N$ weighted samples (particles) to estimate the filtering distribution $p(x_t|y_{1:t})$ for each time $t$. The baseline \emph{bootstrap particle filter} creates the initial set $\mathcal{S}_0$ by drawing $N$ samples from $\mu_0$ and setting their (unnormalized) weights to 1, i.e.,
\begin{equation*}
    \mathcal{S}_0 = \left\{(x^\n_0, w^\n_0) \ \Big|\ x^\n_0 \sim \mu_0, w^\n_0 = 1\right\}_{n=1}^N.
\end{equation*}
The set $\mathcal{S}_t$ at time $t$ is constructed from the previous set of particles $\mathcal{S}_{t-1}$ by repeatedly ($N$ times) choosing a particle from $\mathcal{S}_{t-1}$ with probabilities proportional to their weights (this step is called \emph{resampling}), drawing a new sample $x$ from $f_t(\cdot | x^\a_{t-1})$ where $a$ is the index of the chosen particle (\emph{propagation}), and setting its weight to $g_t(y_t | x)$ (\emph{weighting}):
\begin{align*}
    \mathcal{S}_t =& \left\{(x^\n_t, w^\n_t) \ \Big|\ a_n \sim \mathcal{C}(\{w^\m_{t-1}\}_{m=1}^N) \right.\\
                   & \hspace{5.3pc} \left. x^\n_t \sim f_t(\cdot|x^{(a_n)}_{t-1}), \right.  \\
                   & \hspace{5.3pc} \left. w^\n_t = g_t(y_t|x^\n_t) \right\}_{n=1}^N.
\end{align*}
Here, $\mathcal{C}$ denotes the categorical distribution with the unnormalized event probabilities specified as the parameter. The crucial element of a particle filter is the resampling step that avoids the weight degeneracy problem of sequential importance sampling. The pseudocode for the bootstrap particle filter is listed as Algorithm~\ref{alg:bpf}.

\begin{algorithm}[t]
\caption{Bootstrap particle filter (BPF)}
\label{alg:bpf}
\begin{algorithmic}
\State $\widehat Z_\text{BPF} \gets 1$
\For {$n=1\textbf{\ to\ }N$} \Comment{{\small Initialize}}
    \State $x^\n_0 \sim \mu_0$, $w^\n_0 \gets 1$
\EndFor
\For {$t=1\textbf{\ to\ }T$}
    \For {$n=1\textbf{\ to\ }N$}
        \State $a^\n \sim \mathcal{C}(\{w^\m_{t-1}\}_{m=1}^N)$ \Comment{{\small Resample}}
        \State $x^\n_t \sim f_t(\cdot|x^{(a^\n)}_{t-1})$ \Comment{{\small Propagate}}
        \State $w_t^\n \gets g_t(y_t|x^\n_t)$ \Comment{{\small Weight}}
    \EndFor
    \State $\widehat Z_\text{BPF} \gets \widehat Z_\text{BPF} \sum_{n=1}^N w_t^\n / N$
\EndFor
\end{algorithmic}

\end{algorithm}

The unbiased estimator $\widehat Z_\text{BPF}$ of the marginal likelihood $p(y_{1:T})$ (unbiased in the sense that $\E[\widehat Z_\text{BPF}] = p(y_{1:T})$) is given by \cite{moral2004feynman}:
\begin{equation*}
\widehat Z_\text{BPF} = \prod_{t=1}^T \frac{1}{N} \sum_{n=1}^N w_t^\n.
\end{equation*}

Particle filters are a family of different variants of this algorithm. These variants use different proposal distributions to choose the initial set of samples, to propagate or to resample particles and use appropriate changes in the calculation of the importance weights w.r.t.~the filtering distributions. In order to reduce the variance of estimators, some methods do not resample at every time step, but rather only when a summary statistic of weights crosses a given threshold, e.g., when the effective sample size (ESS) falls below $\nu N$, where $\nu \in [0, 1]$ is a tuning parameter.

In probabilistic programming, the state space model can be used to model program execution and particle filters are used as one of the general probabilistic programming inference methods. Examples of probabilistic programming languages that use particle filters and SMC for inference include Anglican \cite{tolpin2016design}, Biips \cite{todeschini2014biips}, Birch \cite{murray2018automated}, Figaro \cite{pfeffer2016practical}, LibBi \cite{murray2015bayesian}, Venture \cite{mansinghka2014venture}, WebPPL \cite{goodman2014design} and Turing \cite{ge2018turing}.

\section{PARTICLE FILTER WITH REJECTION CONTROL}

In certain models, such as models with jump processes or rare-event processes, or when the measurements contain outliers, the weights of many particles after propagation and weighting might be rather low or even zero. This decreases the ESS and thus increases the variance of the estimators of interest.

Below we present the \emph{particle filter with rejection control} (PF-RC) that ensures that the weights of all particles in the particle set $\S_t$ are greater than or equal to a chosen threshold, denoted by $c_t > 0$. The process of drawing new particles in PF-RC is almost identical to the process for the bootstrap particle filter described in the previous section, with one additional step that we will refer to as the \emph{acceptance} step, described in the next paragraph.

Let $(x'^\n_t, w'^\n_t)$ denote the particle after the resampling, propagation and weighting steps. The particle is accepted (and added to $\S_t$) with probability $\min(1, w'^\n_t/c_t)$. If accepted, the particle weight is lifted to $\max(w'^\n_t, c_t)$. If the particle is rejected, the resampling, propagation, weighting and acceptance steps are repeated until acceptance. The following table summarizes the acceptance step:
\begin{center}
\begin{tabular}{@{}lcll@{}}
Condition & Acc.~prob. & If accepted & If rejected \\
\hline
$w'^\n_t \ge c_t$ & 1 & $w^\n_t \gets w'^\n_t$ & --- \\
& & $x^\n_t \gets x'^\n_t$ & \\
$w'^\n_t < c_t$ & $w'^\n_t/c_t$ & $w^\n_t \gets c_t$ & Sample new \\
& & $x^\n_t \gets x'^\n_t$ & $x'^\n_t$ \\
\end{tabular}
\end{center}
An important difference compared to the bootstrap particle filter is that we also use the same procedure (i.e., the resampling, propagation and acceptance steps) to sample one additional particle. This particle is \emph{not} added to the particle set and its weight is not used either, but the number of propagations until its acceptance is relevant to the estimation of the marginal likelihood $p(y_{1:T})$.

Let $P_t$ denote the total number of propagation steps performed in order to construct $S_t$ as well as the additional particle. By the total number we mean that the propagation steps for both rejected and accepted particles are counted. The estimate $\widehat Z$ of the marginal likelihood $p(y_{1:T})$ is given by
\begin{equation*}
    \widehat Z = \prod_{t=1}^T \frac{\sum_{n=1}^N w^\n_t}{P_t - 1}.
\end{equation*}

\begin{theorem}
The marginal likelihood estimator $\widehat Z$ is unbiased in sense that $\E[\widehat Z] = p(y_{1:T})$.
\end{theorem}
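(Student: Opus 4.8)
The plan is to reduce the claim to the standard Feynman--Kac unbiasedness argument for the bootstrap particle filter, isolating the two features of the acceptance step that together make the per-step factor $\xi_t:=\big(\sum_n w^\n_t\big)/(P_t-1)$ behave, in conditional expectation, exactly like the bootstrap factor $\frac1N\sum_n w^\n_t$. Throughout I condition on the history $\mathcal F_{t-1}$ generated while building $\S_{t-1}$ (which fixes the ancestors, their weights, and the count $P_{t-1}$), and I use that, given $\S_{t-1}$, every proposal---resample an ancestor $a\propto w^\a_{t-1}$, propagate $x'\sim f_t(\cdot\mid x^\a_{t-1})$, weight $w'=g_t(y_t\mid x')$, accept with probability $\min(1,w'/c_t)$---is i.i.d., since each rejection triggers a fresh resample.

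First I would record the weight-lifting identity: writing $A$ for the acceptance indicator of a single proposal, a case split on $w'\ge c_t$ versus $w'<c_t$ gives $\E[\max(w',c_t)\,A\mid w']=w'$ in both cases. Hence, for any bounded test function $\varphi$, $\E[\max(w',c_t)\,\varphi(x')\,A\mid\S_{t-1}]=\E[w'\varphi(x')\mid\S_{t-1}]=\frac{1}{W_{t-1}}\sum_a w^\a_{t-1}\int\varphi(x)\,g_t(y_t\mid x)\,f_t(x\mid x^\a_{t-1})\,dx$, where $W_{t-1}=\sum_m w^\m_{t-1}$. Dividing by the per-proposal acceptance probability $\alpha_t:=\E[\min(1,w'/c_t)\mid\S_{t-1}]$ shows that each accepted particle carries an expected weighted value equal to this display over $\alpha_t$.

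Second I would exploit the structure of $P_t$. Because the acceptances are i.i.d.\ Bernoulli$(\alpha_t)$ and we stop at the $(N+1)$-th one, $P_t$ given $\S_{t-1}$ is negative binomial (the number of trials for $N+1$ successes), and the elementary identity $\E\big[\tfrac{r-1}{P-1}\big]=p$ for such a variable---proved by the reindexing $\tfrac{r-1}{k-1}\binom{k-1}{r-1}=\binom{k-2}{r-2}$---gives $\E[1/(P_t-1)\mid\S_{t-1}]=\alpha_t/N$. The key structural point is that the first $N+1$ accepted (state, weight) pairs are independent of $P_t$: conditioning on the entire indicator sequence $(A_i)_i$, the proposals at accepted positions are i.i.d.\ with the law of $\big(x',\max(w',c_t)\big)$ given $A=1$, and this law does not depend on $(A_i)_i$ beyond which positions are accepted. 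Multiplying these two independent factors, the $\alpha_t$ cancel and I obtain the crucial per-step relation $\E\!\left[\frac{\sum_n w^\n_t\,\varphi(x^\n_t)}{P_t-1}\,\Big|\,\S_{t-1}\right]=\frac{1}{W_{t-1}}\sum_a w^\a_{t-1}\int\varphi(x)\,g_t(y_t\mid x)\,f_t(x\mid x^\a_{t-1})\,dx$, i.e.\ exactly the bootstrap filter's one-step expectation; this cancellation is precisely why subtracting $1$ and adding one extra particle are the correct choices.

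Finally I would run the usual induction on the unnormalized estimator $\widehat\gamma_t(\varphi):=\widehat Z_{t-1}\big(\sum_n w^\n_t\,\varphi(x^\n_t)\big)/(P_t-1)$, whose value at $\varphi\equiv1$ is $\widehat Z_t$. Using $\widehat Z_{t-1}/W_{t-1}=\widehat Z_{t-2}/(P_{t-1}-1)$ together with the per-step relation above, the conditional expectation collapses to $\E[\widehat\gamma_t(\varphi)\mid\mathcal F_{t-1}]=\widehat\gamma_{t-1}(\psi)$ with $\psi(x_{t-1})=\int\varphi(x)\,g_t(y_t\mid x)\,f_t(x\mid x_{t-1})\,dx$; taking expectations and invoking the induction hypothesis $\E[\widehat\gamma_{t-1}(\psi)]=\gamma_{t-1}(\psi)=\gamma_t(\varphi)$, where $\gamma_t(\varphi):=\E_{\text{prior}}\big[\varphi(x_t)\prod_{s=1}^t g_s(y_s\mid x_s)\big]$ denotes the unnormalized Feynman--Kac flow, closes the step, while the base case $t=1$ follows directly from $\widehat Z_0=1$ and $w^\n_0=1$. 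Setting $\varphi\equiv1$ then yields $\E[\widehat Z_T]=\gamma_T(1)=p(y_{1:T})$. I expect the main obstacle to be exactly the combination highlighted above---the independence of the accepted particles from $P_t$ and the negative-binomial identity conspiring to cancel $\alpha_t$; once that per-step relation is in hand, the surrounding recursion is routine.
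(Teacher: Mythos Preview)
Your proposal is correct and follows essentially the same route as the paper: the weight-lifting identity $\max(w',c_t)\min(1,w'/c_t)=w'$, the negative-binomial law of $P_t$ with the identity $\E[N/(P_t-1)\mid\S_{t-1}]=\alpha_t$, and a tower/induction argument are exactly the three ingredients the paper uses in its Lemmas~1--3. The only difference is presentational: you state the per-step relation once for a general test function $\varphi$ and run a Feynman--Kac induction on $\widehat\gamma_t(\varphi)$, whereas the paper proves the $\varphi\equiv1$ case (Lemma~1) and the $\varphi(x)=p(y_{t+1:t'}\mid x)$ case (Lemma~2) separately before the induction (Lemma~3); you also make explicit the independence of the accepted particles from $P_t$, which the paper uses implicitly when it writes $\E[\sum_n w_t^\n/(P_t-1)]=\sum_D \frac{N\E[w_t]}{D-1}\Pr(P_t=D)$.
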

\begin{proof}
See Appendix A.
\end{proof}

\begin{algorithm}[t]
\caption{Particle filter with rejection control (PF-RC)}
\label{alg:pf-rc}
\begin{algorithmic}
\color{gray}
\State $\widehat Z \gets 1$
\For {$n=1\textbf{\ to\ }N$} \Comment{{\small Initialize}}
    \State $x^\n_0 \sim \mu_0$, $w^\n_0 \gets 1$
\EndFor
\For {$t=1\textbf{\ to\ }T$}
    \color{black}\State $P_t \gets 0$\color{gray}
    \For {$n=1\textbf{\ to\ }N$}
        \color{black}\Repeat\color{gray}
            \State $a^\n \sim \mathcal{C}(\{w^\m_{t-1}\}_{m=1}^N)$ \Comment{{\small Resample}}
            \State $x^\n_t \sim f_t(\cdot|x^{(a^\n)}_{t-1})$ \Comment{{\small Propagate}}
            \State $w_t^\n \gets g_t(y_t|x^\n_t)$ \Comment{{\small Weight}}
            {\color{black}\State $P_t \gets P_t + 1$}
            {\color{black}\State $\alpha \sim \operatorname{Bernoulli}(\min(1, w^\n_t  / c_t))$}
        \color{black}\Until{$\alpha$} \Comment{{\small Accept / reject}}
        \color{black}\State $w^\n_t \gets \max(w^\n_t, c_t)$ \Comment{{\small Update the weight}}
    \color{gray}\EndFor
    \color{black}
    \Repeat \Comment{{\small Additional particle}}
        \State $a' \sim \mathcal{C}(\{w^\m_{t-1}\}_{m=1}^N)$
        \State $x' \sim f_t(\cdot|x^{(a')}_{t-1})$
        \State $w' \gets g_t(y_t|x')$
        \State $P_t \gets P_t + 1$
        \State $\alpha \sim \operatorname{Bernoulli}(\min(1, w' / c_t))$
    \Until{$\alpha$}
    \State $\widehat Z \gets \widehat Z (\sum_{n=1}^N w^\n_t) / (P_t - 1)$
\EndFor
\end{algorithmic}

\end{algorithm}

The pseudocode for the PF-RC is listed as Algorithm~\ref{alg:pf-rc}. The gray color marks the part that is the same as in the bootstrap particle filter.

The question remains of how to choose the thresholds $\{c_t\}$. One option is to use some prior knowledge (that can be obtained by pilot runs of a particle filter) and choose fixed thresholds. Liu et al. \cite{liu1998rejection} mention several options for determining the thresholds dynamically after propagating and weighting all particles for the first time at each time step, using a certain quantile of these weights or a weighted average of the minimum, average and maximum weight, i.e., $c_t = p_1 \min w'_t + p_2 \bar{w}'_t + p_3 \max w'_t$, where all $p_i > 0$ and $p_1 + p_2 + p_3 = 1$. In general, setting the thresholds dynamically in each run breaks the unbiasedness of the marginal likelihood estimator (as demonstrated by an example in Appendix B).

Note that the alive particle filter \cite{kudlicka2019probabilistic} can be obtained as a limiting case of the particle filter with rejection control when all $c_t \rightarrow 0$. Instead of $\alpha \sim \operatorname{Bernoulli}(\min(1, w/c_t))$ we need to use $\alpha \gets \text{true}$ if $w > 0$ and false otherwise, but the rest of the algorithm remains the same.

\section{EXPERIMENTS}

\subsection{Linear Gaussian state space model with outliers}

\begin{table}[t]
    \centering
    \caption{Comparison of the filters for the model with outliers. See also the description in the text.}
    \label{tab:lgss_summary}
    \vspace{0.5pc}
    \begin{tabular}{@{}lr|r|r|r|r|r|r@{}}
        \multicolumn{2}{@{}r|}{\multirow{2}{*}{$c$}} & \multirow{2}{*}{$N$} & \multirow{2}{*}{$\rho$} & \multirow{2}{*}{$\ESS$} & $\ESS$ & $\var$ & $\rho\var$ \\
        \multicolumn{2}{@{}r|}{} & & & & $/\rho$ & $\log\widehat Z$ & $\log\widehat Z$ \\
        \hline
        \multicolumn{2}{@{}l|}{BPF} & 1024 & 1.00 & 101.6 & 101.6 & 2.18 & 2.18 \\
        \hline
        \multirow{7}{*}{\rotatebox{90}{PF-RC}} 
        & $10^{-14}$ & \multirow{7}{*}{1024} & 1.06 & 180.1 & 169.8 & 1.13 & 1.20 \\
        & $10^{-13}$ & & 1.08 & 285.8 & 264.0 & 1.08 & 1.17 \\
        & $10^{-12}$ & & 1.12 & 386.1 & 346.1 & 1.02 & 1.14 \\
        & $10^{-11}$ & & 1.17 & 460.2 & \bf 394.8 & 0.90 & 1.05 \\
        & $10^{-10}$ & & 1.25 & 471.0 & 377.9 & 0.87 & 1.08 \\
        & $10^{-9}$ & & 1.38 & 491.9 & 356.3 & 0.76 & \bf 1.04 \\
        & $10^{-8}$ & & 1.62 & \bf 568.0 & 350.4 & \bf 0.65 & 1.06 \\
        \hline
        \multicolumn{2}{@{}l|}{BPF} & 1200 & 1.17 & 185.6 & 158.3 & 1.91 & 2.24 \\
    \end{tabular}
\end{table}

The particle filter with rejection control may be useful in situations where measurements include outliers. To demonstrate this we considered the following linear Gaussian state space model:
\begin{align*}
    x_0 \sim \mathcal{N}(0, 0.25), \hspace{2pc} x_t &\sim \mathcal{N}(0.8 x_{t-1}, 0.25), \\
    y_t &\sim \mathcal{N}(x_t, 0.1).
\end{align*}
We simulated a set of measurements with outliers by replacing the measurement equation with $y_t \sim 0.9\,\mathcal{N}(x_t, 0.1) + 0.1\,\mathcal{N}(0, 1)$, and used these measurements in a set of experiment with both the BPF and a set of PF-RC with the thresholds at all checkpoints equal to a given value, i.e. $c_t = c$, where $c \in \{10^{-14}, 10^{-13}, \dots, 10^{-8}\}$. We ran each filter $M=1000$ times using $N=1024$ particles, collected a set of the estimates $\{\widehat Z_m\}_{m=1}^M$ of the marginal likelihood, and calculated several summary statistics, presented in Table~\ref{tab:lgss_summary}. Here, the effective sample size ESS means $(\sum_{m=1}^M \widehat Z_m)^2 / \sum_{m=1}^M \widehat Z_m^2$, and $\rho$ denotes the average number of propagations relatively to the number of propagation in the bootstrap particle filter with the same number of particles.

The filter that maximized $\ESS/\rho$ used 1.17 times more propagations than the baseline BPF, so we also repeated the experiment using a BPF with 1200 particles to match the number of propagations; the results are shown in the last row.

\subsection{Object tracking}

We implemented the particle filter with rejection control as an inference method in the probabilistic programming language Birch \cite{murray2018automated}. We used the multiple object tracking model from \cite{murray2018automated} but restricted it to two objects that both appear at the initial time within the target area for computational purposes.

We used the program to simulate the tracks and measurements for 50 time steps. We ran experiments using 50-th, 60-th, 70-th, 80-th, 90-th, 95-th and 99-th percentiles as the thresholds, but in order to keep the marginal likelihood estimates unbiased, we first ran the program once for each of the percentiles (using 32768 particles) and saved the threshold values. We then ran the program $M=100$ times for each of the percentiles, using the saved values as fixed thresholds and with only $N=4096$ particles. In case where the threshold was 0, we fell back to accepting all particles. We compared the marginal likelihood estimates with the estimates obtained by running the program with the BPF inference. The results are summarized in Table~\ref{tab:mot_summary} and Fig.~\ref{fig:mot_logz}.

We also repeated the experiments using a bootstrap particle filter with 32650 particles (to match the number of propagations in the filter with the 99-th percentile), the results are presented in the last row of the table.

\begin{table}[t]
    \centering
    \caption{Comparison of the filters for the object tracking problem.}
    \label{tab:mot_summary}
    \vspace{0.5pc}
    \begin{tabular}{@{}lr|r|r|r|r|r|r@{}}
        \multicolumn{2}{@{}r|}{Per-} & \multirow{2}{*}{$N$} & \multirow{2}{*}{$\rho$} & \multirow{2}{*}{$\ESS$} & $\ESS$ & $\var$ & $\rho\var$ \\
        \multicolumn{2}{@{}r|}{centile} & & & & $/\rho$ & $\log\widehat Z$ & $\log\widehat Z$ \\
        \hline
        \multicolumn{2}{@{}l|}{BPF} & 4096 & 1.00 & 8.5 & \bf 8.5 & 482.00 & 482.00 \\
        \hline
        \multirow{6}{*}{\rotatebox{90}{PF-RC}}
        & 50 & \multirow{7}{*}{4096} & 2.68 & 4.3 & 1.6 & 94.75 & 253.68 \\
        & 60 & & 2.13 & 6.8 & 3.2 & 44.58 & 94.83 \\
        & 70 & & 1.99 & 8.7 & 4.4 & 35.22 & 70.18 \\
        & 80 & & 2.35 & 10.1 & 4.3 & 32.30 & 75.95 \\
        & 90 & & 3.39 & 13.7 & 4.0 & 14.63 & 49.54 \\
        & 95 & & 4.78 & 32.0 & 6.7 & 3.95 & 18.87 \\
        & 99 & & 7.97 & \bf 44.6 & 5.6 & \bf 1.08 & \bf 8.65 \\
        \hline
        \multicolumn{2}{@{}l|}{BPF} & 32650 & 7.97 & 10.7 & 1.3 & 14.39 & 114.67 \\
    \end{tabular}
\end{table}

\begin{figure}[t]
    \centering
    \resizebox{\columnwidth}{!}{\input{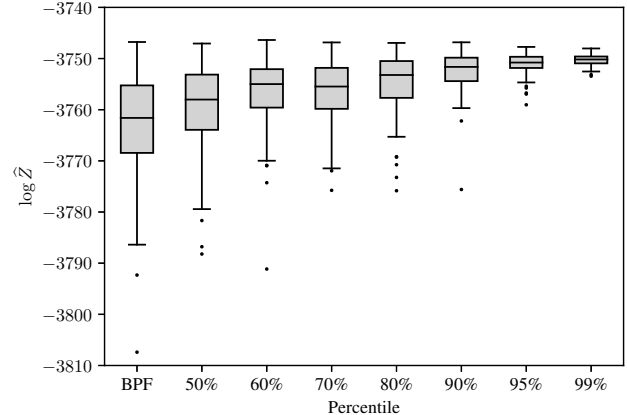}}
    \caption{Box plot of $\log\widehat Z$ for the object tracking problem.}
    \label{fig:mot_logz}
\end{figure}

\section{DISCUSSION AND CONCLUSION}

In this paper we presented a particle filter with rejection control (PF-RC) that enables unbiased estimation of the marginal likelihood. We briefly mentioned several situations where the unbiasedness is important. We also showed a couple of examples that demonstrated the potential of the method. As we saw, the PF-RC outperformed (in terms of ESS and $\var\log\widehat Z$) the bootstrap particle filter (BPF) even when the latter used more particles and matched the total number of propagations. This is due to the fact that the number of propagations varies between the time steps in PF-RC, and more propagations are used ``where it is needed'', while BPF uses the same number of propagations at each time step. The PF-RC also needs to use less memory compared to the BPF with the matched number of propagations, which might be an important advantage in problems requiring many particles. On the other hand, it might not always be clear how to determine the thresholds. In our future work we wish to look into this question, especially in the context of using PF-RC in exact approximate methods such as particle marginal Metropolis-Hastings method.

\bibliographystyle{IEEEbib}
\bibliography{refs}

\begin{appendices}
\onecolumn

\section{PROOF OF UNBIASEDNESS OF THE MARGINAL LIKELIHOOD ESTIMATOR}

The proof follows the proof of unbiasedness of the marginal likelihood estimator for the alive particle filter given in \cite{kudlicka2019probabilistic}.

\begin{lemma}
\label{lemma:1}
\begin{equation*}
\E\left[\frac{\sum_{n=1}^N w_t^\n}{P_t - 1} \middle| \S_{t-1} \right] = \sum_{n=1}^N \frac{w_{t-1}^\n}{\sum_{m=1}^N w_{t-1}^\m} p\left(y_t \middle| x_{t-1}^\n\right).
\end{equation*}
\end{lemma}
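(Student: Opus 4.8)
The plan is to condition throughout on $\S_{t-1}$, so that each resampling--propagation--acceptance trial is i.i.d. The single most useful fact is the elementary identity
\begin{equation*}
\min\!\left(1, \tfrac{w}{c_t}\right)\max(w, c_t) = w, \qquad w \ge 0,\ c_t > 0,
\end{equation*}
verified by checking $w \ge c_t$ and $w < c_t$ separately. In words, for a single trial proposing a particle of weight $w = g_t(y_t\mid x)$, the acceptance probability $\min(1, w/c_t)$ times the lifted weight $\max(w, c_t)$ returns the original $w$; this is what ultimately connects the lifted weights to the true predictive weight.

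Next I would exploit the counting structure. Given $\S_{t-1}$, let $p$ be the probability that a single trial is accepted. Since trials are repeated until the $(N+1)$-th acceptance ($N$ particles for $\S_t$ plus the additional particle), $P_t$ is the number of trials to reach the $(N+1)$-th success and is negative binomially distributed. I would then invoke the standard identity that, if $X$ is the number of trials to the $r$-th success, then $\E[(r-1)/(X-1)] = p$ (valid for $r \ge 2$). Applied with $r = N+1$ this gives
\begin{equation*}
\E\!\left[\frac{1}{P_t - 1} \,\middle|\, \S_{t-1}\right] = \frac{p}{N}.
\end{equation*}
This step simultaneously explains the ``$-1$'' in the denominator and the role of the extra particle: it produces exactly the factor $1/N$ needed for the cancellation below.

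The remaining ingredient is to decouple the numerator from $P_t$. Write the accepted (lifted) weights as $\tilde W_1, \dots, \tilde W_N$, so that $\sum_{n=1}^N w_t^\n = \sum_{k=1}^N \tilde W_k$. Conditionally on $\S_{t-1}$ and on the entire accept/reject pattern (which fixes $P_t$), the $\tilde W_k$ are i.i.d.\ draws from the law of $\max(w, c_t)$ given acceptance, and are independent of the pattern, because the trials are i.i.d.\ and the weight attached to an accepted trial does not depend on where in the sequence the acceptances occur. Hence $\E[\sum_{k=1}^N \tilde W_k \mid \text{pattern}, \S_{t-1}] = N\bar w$ with $\bar w = \E[\max(w,c_t) \mid \text{accepted}]$, and taking the outer expectation together with the negative-binomial identity gives
\begin{equation*}
\E\!\left[\frac{\sum_{n=1}^N w_t^\n}{P_t - 1} \,\middle|\, \S_{t-1}\right] = N\bar w\,\E\!\left[\frac{1}{P_t - 1}\,\middle|\,\S_{t-1}\right] = \bar w\, p.
\end{equation*}

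Finally I would evaluate $\bar w\, p = \E[\max(w,c_t)\,\mathds{1}(\text{accept}) \mid \S_{t-1}]$ using the identity from the first paragraph: conditioning on the proposed particle, $\E[\max(w,c_t)\,\mathds{1}(\text{accept}) \mid x] = \max(w,c_t)\min(1, w/c_t) = w$, and averaging over the resample--propagate step,
\begin{equation*}
\bar w\, p = \sum_{a=1}^N \frac{w_{t-1}^\a}{\sum_{m} w_{t-1}^\m} \int f_t(x \mid x_{t-1}^\a)\, g_t(y_t \mid x)\, dx = \sum_{n=1}^N \frac{w_{t-1}^\n}{\sum_{m} w_{t-1}^\m}\, p(y_t \mid x_{t-1}^\n),
\end{equation*}
where the last equality is the model relation $p(y_t\mid x_{t-1}) = \int g_t(y_t\mid x_t) f_t(x_t\mid x_{t-1})\, dx_t$. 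This is precisely the claimed right-hand side. I expect the main obstacle to be the decoupling step --- rigorously justifying that the accepted weights are i.i.d.\ and independent of the acceptance pattern (hence of $P_t$), which is exactly what lets the expectation of the ratio factor into $\bar w$ times $\E[1/(P_t-1)]$. The rest reduces to the one-line weight identity and a standard negative-binomial computation.
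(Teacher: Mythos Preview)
Your argument is correct and follows essentially the same route as the paper's proof: both hinge on the identity $\min(1,w/c_t)\max(w,c_t)=w$, the negative-binomial law of $P_t$ with $N+1$ successes, and the resulting factorization $\E[\sum_n w_t^{(n)}/(P_t-1)\mid\S_{t-1}]=N\E[w_t]\cdot\E[1/(P_t-1)]=\E[w_t]\,p_{A_t}$. If anything, you are more explicit than the paper about the decoupling step (that the accepted lifted weights are i.i.d.\ and independent of the accept/reject pattern, hence of $P_t$); the paper simply replaces $\sum_n w_t^{(n)}$ by $N\E[w_t]$ inside the sum over $D$ without comment.
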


\begin{proof}
For brevity we omit conditioning on $\S_{t-1}$ in the notation. A candidate sample $x'$ is constructed by drawing a sample from $\S_{t-1}$ with the probabilities proportional to the weights $\{w_{t-1}^\n\}$ and propagating it forward to time $t$, i.e.
\begin{equation*}
x' \sim \sum_{n=1}^N \frac{w_{t-1}^\n}{\sum_{m=1}^N w_{t-1}^\m} f_t\left(x' \middle| x_{t-1}^\n\right). 
\end{equation*}
The candidate sample $x'$ is accepted with probability $\min(1, g_t(y_t|x')/c_t)$. If the sample is rejected, a new candidate sample is drawn from the above-mentioned distribution. The acceptance probability $p_{A_t}$ is given by
\begin{equation*} p_{A_t} = \int \min\left(1, \frac{g_t(y_t|x')}{c_t}\right) \sum_{n=1}^N \frac{w_{t-1}^\n} {\sum_{m=1}^N w_{t-1}^\m} f_t\left(x' \middle| x_{t-1}^\n\right) dx'.
\end{equation*}
Accepted samples are distributed according to the following distribution:
\begin{equation*}
x_t \sim \frac{1}{p_{A_t}} \min\left(1, \frac{g_t(y_t|x_t)}{c_t}\right) \sum_{n=1}^N \frac{w_{t-1}^\n}{\sum_{m=1}^N w_{t-1}^\m} f_t\left(x_t \middle| x_{t-1}^\n\right)
\end{equation*}
and the expected value of the weight $w_t = \max(g_t(y_t|x_t), c_t)$ of an accepted sample is given by
\begin{equation*}
\E[w_t] = \int \max\left(g_t(y_t|x_t), c_t\right) \frac{1}{p_{A_t}} \min\left(1, \frac{g_t(y_t|x_t)}{c_t}\right) \sum_{n=1}^N \frac{w_{t-1}^\n}{\sum_{m=1}^N w_{t-1}^\m} f_t\left(x_t \middle| x_{t-1}^\n\right) dx_t.
\end{equation*}
Note that $\max\left(g_t(y_t|x_t), c_t\right) \times \min(1, g_t(y_t|x_t)/c_t) = g_t(y_t|x_t)$. To prove that, consider two cases: if $g_t(y_t|x_t) \geq c_t$, the result of the multiplication is $g_t(y_t|x_t) \times 1 = g_t(y_t|x_t)$; if $g_t(y_t|x_t) < c_t$, the result is $c_t \times g_t(y_t|x_t) / c_t = g_t(y_t|x_t)$. (This also gives an intuition about why the weight gets lifted to $c_t$ in the case of acceptance with $w_t < c_t$.) Using this we have that:
\begin{align*}
\E[w_t] &= \int g_t(y_t|x_t) \frac{1}{p_{A_t}} \sum_{n=1}^N \frac{w_{t-1}^\n}{\sum_{m=1}^N w_{t-1}^\m} f_t\left(x_t \middle| x_{t-1}^\n\right)  dx_t = \frac{1}{p_{A_t}} \sum_{n=1}^N \frac{w_{t-1}^\n}{\sum_{m=1}^N w_{t-1}^\m} \int f_t\left(x_t \middle| x_{t-1}^\n\right) g_t(y_t|x_t) dx_t \\
&= \frac{1}{p_{A_t}} \sum_{n=1}^N \frac{w_{t-1}^\n}{\sum_{m=1}^N w_{t-1}^\m}\ p\left(y_t \middle| x_{t-1}^\n\right).
\end{align*}

The number of propagations $P_t$ is a random variable distributed according to the negative binomial distribution with the number of successes $N+1$ and the probability of success $p_{A_t}$:
\begin{equation*}
P(P_t = D) = \binom{D-1}{(N+1)-1} p_{A_t}^{N+1} (1-p_{A_t})^{D-(N+1)}.
\end{equation*}

The expected value of $\sum_{n=1}^N  w_t^\n / (P_t - 1)$ is given by
\begin{align*}
\E\left[\frac{\sum_{n=1}^N w_t^\n}{P_t - 1}\right] &= \sum_{D=N+1}^\infty \frac{N \E[w_t]}{D-1} \binom{D-1}{N} p_{A_t}^{N+1} (1-p_{A_t})^{D-(N+1)} \\
&= N \E[w_t] \sum_{D=N+1}^\infty \frac{1}{D-1} \binom{D-1}{N} p_{A_t}^{N+1} (1-p_{A_t})^{D-(N+1)} = N \E[w_t] \frac{p_{A_t}}{N} \\
&= \sum_{n=1}^N \frac{w_{t-1}^\n}{\sum_{m=1}^N w_{t-1}^\m}\ p\left(y_t \middle| x_{t-1}^\n\right).
\end{align*}
\end{proof}

The rest of the proof is identical to the proof in \cite{kudlicka2019probabilistic}, which we include below for completeness.

\begin{lemma}
\label{lemma:2}
\begin{equation*}
\E\left[\frac{\sum_{n=1}^N w_t^\n p\left(y_{t+1:t'} \middle| x_t^\n\right)}{P_t - 1} \middle| \mathcal{S}_{t-1} \right] = \sum_{n=1}^N \frac{w_{t-1}^\n}{\sum_{m=1}^N w_{t-1}^\m} p\left(y_{t:t'} \middle| x_{t-1}^\n\right).
\end{equation*}
\end{lemma}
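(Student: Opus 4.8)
The plan is to re-run the proof of Lemma~\ref{lemma:1} almost verbatim, carrying the extra deterministic factor $h(x_t^\n) := p(y_{t+1:t'} \mid x_t^\n)$ through the computation. The key observation is that $h$ is a fixed function of the accepted state and plays no role whatsoever in the acceptance test, which depends only on $g_t(y_t \mid x')$. Consequently the candidate distribution, the acceptance probability $p_{A_t}$, the distribution of accepted samples, and the negative-binomial law of $P_t$ (with $N+1$ successes and success probability $p_{A_t}$) are all exactly as in Lemma~\ref{lemma:1}. In particular, conditional on $\S_{t-1}$ the $N$ accepted main particles are i.i.d.\ draws from the accepted distribution and are independent of the reject pattern, so conditioning on $P_t = D$ does not alter the distribution of any single accepted pair $(w_t^\n, x_t^\n)$.

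First I would compute $\E[w_t\, h(x_t)]$ for a single accepted particle, following the same steps as in Lemma~\ref{lemma:1}. Writing the expectation against the accepted distribution and using the identity $\max(g_t(y_t \mid x_t), c_t)\,\min(1, g_t(y_t \mid x_t)/c_t) = g_t(y_t \mid x_t)$ already established there, the $\max$ and $\min$ factors collapse to $g_t(y_t \mid x_t)$ while the $1/p_{A_t}$ normalizer survives, leaving
\begin{equation*}
\E[w_t\, h(x_t)] = \frac{1}{p_{A_t}} \sum_{n=1}^N \frac{w_{t-1}^\n}{\sum_{m=1}^N w_{t-1}^\m} \int f_t(x_t \mid x_{t-1}^\n)\, g_t(y_t \mid x_t)\, p(y_{t+1:t'} \mid x_t)\, dx_t.
\end{equation*}

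The one genuinely new ingredient --- and the step I expect to be the crux --- is evaluating the inner integral. By the Markov property of the state evolution and the conditional independence of the measurements given the states, the measurements $y_{t+1:t'}$ depend on $x_t$ only through $x_{t+1:t'}$, so $p(y_{t+1:t'} \mid x_t, y_t) = p(y_{t+1:t'} \mid x_t)$, and marginalizing $x_t$ gives the one-step likelihood recursion
\begin{equation*}
\int f_t(x_t \mid x_{t-1}^\n)\, g_t(y_t \mid x_t)\, p(y_{t+1:t'} \mid x_t)\, dx_t = p(y_{t:t'} \mid x_{t-1}^\n).
\end{equation*}
Substituting this back yields $\E[w_t\, h(x_t)] = \frac{1}{p_{A_t}} \sum_{n=1}^N \frac{w_{t-1}^\n}{\sum_{m=1}^N w_{t-1}^\m}\, p(y_{t:t'} \mid x_{t-1}^\n)$. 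Setting $t' = t$ (an empty future block, $h \equiv 1$) recovers Lemma~\ref{lemma:1}, which is a useful consistency check.

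Finally I would close the argument by reusing the negative-binomial summation from Lemma~\ref{lemma:1} unchanged: since conditioning on $P_t = D$ leaves $\E[\sum_{n=1}^N w_t^\n\, h(x_t^\n) \mid P_t = D] = N\,\E[w_t\, h(x_t)]$, summing $1/(D-1)$ against the negative-binomial weights produces the factor $p_{A_t}/N$, exactly as before. This cancels the $1/p_{A_t}$ and the $N$, giving
\begin{equation*}
\E\!\left[\frac{\sum_{n=1}^N w_t^\n\, p(y_{t+1:t'} \mid x_t^\n)}{P_t - 1} \,\middle|\, \S_{t-1}\right] = \sum_{n=1}^N \frac{w_{t-1}^\n}{\sum_{m=1}^N w_{t-1}^\m}\, p(y_{t:t'} \mid x_{t-1}^\n),
\end{equation*}
as claimed. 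The bulk of the work is thus inherited from Lemma~\ref{lemma:1}; only the likelihood recursion is new, and it is where I would be most careful to invoke the model's Markov and conditional-independence assumptions correctly.
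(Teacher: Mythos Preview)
Your proposal is correct and follows essentially the same approach as the paper: compute $\E[w_t\,p(y_{t+1:t'}\mid x_t)]$ against the accepted distribution using the $\max\cdot\min=g_t$ identity, collapse the resulting integral to $p(y_{t:t'}\mid x_{t-1}^\n)$, and then reuse the negative-binomial summation from Lemma~\ref{lemma:1} to produce the factor $p_{A_t}/N$. If anything, you are more explicit than the paper about the independence of the accepted pairs from $P_t$ and about the Markov/conditional-independence step that justifies the likelihood recursion, both of which the paper leaves implicit.
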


\begin{proof}
Similar to the proof of Lemma \ref{lemma:1} we have that
\begin{align*}
\E[w_t p(y_{t+1:t'}|x_t)] &= \int \frac{1}{p_{A_t}} \sum_{n=1}^N \frac{w_{t-1}^\n}{\sum_{m=1}^N w_{t-1}^\m} f_t\left(x_t \middle| x_{t-1}^\n\right) g_t(y_t|x_t) p(y_{t+1:t'}|x_t) dx_t \\ 
&= \frac{1}{p_{A_t}} \sum_{n=1}^N \frac{w_{t-1}^\n}{\sum_{m=1}^N w_{t-1}^\m} \int f_t\left(x_t \middle| x_{t-1}^\n\right) g_t(y_t|x_t) p(y_{t+1:t'}|x_t) dx_t \\
&= \frac{1}{p_{A_t}} \sum_{n=1}^N \frac{w_{t-1}^\n}{\sum_{m=1}^N w_{t-1}^\m}\ p\left(y_{t:t'} \middle| x_{t-1}^\n\right).
\end{align*}
Using this result we have that
\begin{align*}
\E\left[\frac{\sum_{n=1}^N w_t^\n p\left(y_{t+1:t'} \middle| x_t^\n\right)}{P_t - 1}\right] &= \sum_{D=N+1}^\infty \frac{N \E[w_t p(y_{t+1:t'}|x_t)]}{D-1} \binom{D-1}{N} p_{A_t}^{N+1} (1-p_{A_t})^{D-(N+1)} \\
&= N \E[w_t p(y_{t+1:t'}|x_t)] \sum_{D=N+1}^\infty \frac{1}{D-1} \binom{D-1}{N} p_{A_t}^{N+1} (1-p_{A_t})^{D-(N+1)} \\
&= N \E[w_t p(y_{t+1:t'}|x_t)] \frac{p_{A_t}}{N} = \sum_{n=1}^N \frac{w_{t-1}^\n}{\sum_{m=1}^N w_{t-1}^\m}\ p\left(y_{t:t'} \middle| x_{t-1}^\n\right).
\end{align*}

\end{proof}

\begin{lemma}
\label{lemma:3}
\begin{equation*}
\E\left[\prod_{t'=t-h}^t \frac{\sum_{n=1}^N w_{t'}^\n}{P_{t'} - 1} \middle| \mathcal{S}_{t-h-1} \right] = \sum_{n=1}^N \frac{w_{t-h-1}^\n}{\sum_{m=1}^N w_{t-h-1}^\m} p\left(y_{t-h:t} \middle| x_{t-h-1}^\n\right).
\end{equation*}
\end{lemma}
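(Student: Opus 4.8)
The plan is to prove the statement by induction on the horizon $h$, with the base case supplied by Lemma~\ref{lemma:1} and the inductive step driven by Lemma~\ref{lemma:2}. The final time $t$ is held fixed throughout; only the length of the window $h$ varies. For $h=0$ the product consists of the single factor at time $t$, and the claimed identity reduces to exactly Lemma~\ref{lemma:1}, so the base case requires no further work.

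For the inductive step I assume the identity with $h$ replaced by $h-1$ and peel off the earliest factor of the product, applying the tower property by conditioning on the intermediate particle set $\S_{t-h}$:
\begin{align*}
\E\left[\prod_{t'=t-h}^t \frac{\sum_n w_{t'}^\n}{P_{t'}-1} \middle| \S_{t-h-1}\right]
&= \E\left[\frac{\sum_n w_{t-h}^\n}{P_{t-h}-1}\, \E\left[\prod_{t'=t-h+1}^t \frac{\sum_n w_{t'}^\n}{P_{t'}-1} \middle| \S_{t-h}\right] \middle| \S_{t-h-1}\right].
\end{align*}
Here I would invoke the Markov structure of the filter: everything constructed after time $t-h$ depends on the past only through the particle set $\S_{t-h}$ (its states and weights), and in particular not through $P_{t-h}$, so the inner conditional expectation may legitimately be taken with respect to $\S_{t-h}$ alone while the factor $(\sum_n w_{t-h}^\n)/(P_{t-h}-1)$ is pulled outside it.

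By the induction hypothesis, applied with conditioning set $\S_{t-h}$ and window length $h-1$ (note $t-(h-1)-1 = t-h$), the inner expectation equals $\sum_n \frac{w_{t-h}^\n}{\sum_m w_{t-h}^\m} p(y_{t-h+1:t}\mid x_{t-h}^\n)$, which is a function of $\S_{t-h}$ only. Multiplying this by $\sum_n w_{t-h}^\n$ cancels the normalizing denominator $\sum_m w_{t-h}^\m$, so the entire bracketed quantity collapses to $\sum_n w_{t-h}^\n\, p(y_{t-h+1:t}\mid x_{t-h}^\n)/(P_{t-h}-1)$. The outer expectation is then precisely the left-hand side of Lemma~\ref{lemma:2} under the substitution $t\mapsto t-h$ and $t'\mapsto t$, and applying that lemma produces $\sum_n \frac{w_{t-h-1}^\n}{\sum_m w_{t-h-1}^\m} p(y_{t-h:t}\mid x_{t-h-1}^\n)$, which is exactly the asserted right-hand side.

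I expect the main obstacle to be the measurability and conditional-independence bookkeeping in the tower-property step, rather than any computation: one must justify carefully that the post-$(t-h)$ construction is conditionally independent of $P_{t-h}$ given $\S_{t-h}$, so that replacing the conditioning information by the $\sigma$-algebra generated by $\S_{t-h}$ alone is valid, and that the index shift in the induction hypothesis lines up correctly. Once this is settled, the algebraic cancellation of the two weighted sums and the appeal to Lemma~\ref{lemma:2} are routine.
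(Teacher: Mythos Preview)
Your proof is correct and follows essentially the same approach as the paper: induction on $h$ with base case supplied by Lemma~\ref{lemma:1}, then in the inductive step peel off the earliest factor via the tower property, apply the induction hypothesis to the inner expectation, cancel the normalizing sum of weights, and finish with Lemma~\ref{lemma:2}. The paper's version is terser and assumes for $h$ to prove $h+1$ rather than your $h-1\to h$ indexing, but the argument is identical; your explicit remarks on the conditional-independence bookkeeping (that the post-$(t-h)$ construction depends on the past only through $\S_{t-h}$ and not through $P_{t-h}$) are a welcome clarification the paper leaves implicit.
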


\begin{proof}
By induction. The base step for $h=0$ was proved in Lemma \ref{lemma:1}. In the induction step, let us assume that the equality holds for $h$ and prove it for $h+1$:
\begin{align*}
\E\left[\prod_{t'=t-h-1}^t \frac{\sum_{n=1}^N w_{t'}^\n}{P_t' - 1} \middle| \mathcal{S}_{t-h-2} \right]
&= \E\left[\E\left[\prod_{t'=t-h}^t \frac{\sum_{n=1}^N w_{t'}^\n}{P_t' - 1} \middle| \mathcal{S}_{t-h-1} \right] \frac{\sum_{n=1}^N w_{t-h-1}^\n}{P_{t-h-1} - 1} \middle| \mathcal{S}_{t-h-2} \right]\\
& \text{(using the induction assumption)}\\
&= \E\left[   \sum_{n=1}^N \frac{w_{t-h-1}^\n}{\sum_{m=1}^N w_{t-h-1}^\m} p\left(y_{t-h:t} \middle| x_{t-h-1}^\n\right) \frac{\sum_{n=1}^N w_{t-h-1}^\n}{P_{t-h-1} - 1} \middle| \mathcal{S}_{t-h-2} \right] \\
&= \E\left[   \sum_{n=1}^N \frac{w_{t-h-1}^\n}{P_{t-h-1} - 1} p\left(y_{t-h:t} \middle| x_{t-h-1}^\n\right) \middle| \mathcal{S}_{t-h-2} \right] \\
& \text{(using Lemma \ref{lemma:2})}\\
&= \sum_{n=1}^N \frac{w_{t-h-2}^\n}{\sum_{m=1}^N w_{t-h-2}^\m} p\left(y_{t-h-1:t} \middle| x_{t-h-2}^\n\right).
\end{align*}
\end{proof}

\begin{theorem}
\begin{equation*}
\E\left[\prod_{t=1}^T \frac{\sum\limits_{n=1}^N w_t^\n}{P_t - 1}\right] = p(y_{1:T}).
\end{equation*}
\end{theorem}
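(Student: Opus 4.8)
The plan is to recognize that this theorem is simply the fully unconditioned version of Lemma~\ref{lemma:3}, obtained by specializing the indices and then integrating out the randomness in the initial particle set $\S_0$. Concretely, I would set $t = T$ and $h = T-1$ in Lemma~\ref{lemma:3}, so that $t-h = 1$ and $t-h-1 = 0$; this turns the product into one ranging over $t' = 1, \dots, T$ conditioned on $\S_0$. Invoking the lemma in this specialization gives
\begin{equation*}
\E\left[\prod_{t=1}^T \frac{\sum_{n=1}^N w_t^\n}{P_t - 1} \middle| \S_0 \right] = \sum_{n=1}^N \frac{w_0^\n}{\sum_{m=1}^N w_0^\m}\, p\left(y_{1:T} \middle| x_0^\n\right).
\end{equation*}

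Next I would substitute the initialization of the filter. By construction every particle in $\S_0$ carries unit weight, $w_0^\n = 1$, so $\sum_{m=1}^N w_0^\m = N$ and the right-hand side collapses to the plain average $\frac{1}{N}\sum_{n=1}^N p\!\left(y_{1:T} \mid x_0^\n\right)$. I would then apply the tower property $\E[\,\cdot\,] = \E\!\left[\E[\,\cdot \mid \S_0]\right]$ to remove the conditioning, using that the initial states $x_0^\n$ are drawn independently from $\mu_0$. Since $\int \mu_0(x_0)\, p(y_{1:T}\mid x_0)\, dx_0 = p(y_{1:T})$ by the definition of the marginal likelihood, each of the $N$ identically distributed summands has expectation $p(y_{1:T})$, and the $\tfrac{1}{N}$ averaging yields exactly $p(y_{1:T})$.

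All of the analytical work has already been carried out in Lemmas~\ref{lemma:1}--\ref{lemma:3}: Lemma~\ref{lemma:1} handles the single-step negative-binomial computation that turns the ratio $\sum_n w_t^\n/(P_t-1)$ into an exact one-step predictive, Lemma~\ref{lemma:2} propagates a future-likelihood factor through the same mechanism, and Lemma~\ref{lemma:3} chains these together by induction on $h$. Given that scaffolding, the theorem is essentially bookkeeping, and the \emph{only} point requiring genuine care is the time-$0$ boundary — one must correctly use the flat initialization $w_0^\n = 1$ and integrate against the prior $\mu_0$, rather than against an already-weighted set as in the inductive interior of Lemma~\ref{lemma:3}. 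No step here poses a real obstacle; the substantive difficulty lies entirely in the (already established) negative-binomial identity $\sum_{D} \tfrac{1}{D-1}\binom{D-1}{N} p_{A_t}^{N+1}(1-p_{A_t})^{D-(N+1)} = p_{A_t}/N$ that underwrites the earlier lemmas.
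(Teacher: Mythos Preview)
Your proposal is correct and follows the paper's own argument essentially verbatim: the paper's proof consists of the single line ``Using Lemma~\ref{lemma:3} with $t=T$, $h=T-1$ and $\E\!\left[\tfrac{1}{N}\sum_{n=1}^N p(y_{1:T}\mid x_0^\n)\right] = p(y_{1:T})$,'' which is exactly the specialization plus tower-property step you describe. Your write-up is in fact more explicit than the paper's about why the right-hand side collapses (via $w_0^\n=1$ and $x_0^\n\sim\mu_0$), but the route is identical.
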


\begin{proof}
Using Lemma \ref{lemma:3} with $t=T, h=T-1$ and
\begin{equation*}
\E\left[\frac{1}{N} \sum_{n=1}^N p\left(y_{1:T} \middle| x_0^\n\right)\right] = p(y_{1:T}).
\end{equation*}
\end{proof}

\section{Biasedness of the marginal likelihood estimator with dynamic thresholds}

Consider the following example: There are two coins on a table, one fair (F) and one biased (B). The probability of getting head (H) and tail (T) using the biased coin are 0.8 and 0.2, respectively. We choose a coin (uniformly at random) and flip it. The probability that the outcome will be head is
\begin{align*}
p(Y=\text{H}) &= p(Y=\text{H}|X=\text{F}) p(X=\text{F}) + p(Y=\text{H}|X=\text{B}) p(X=\text{B}) = \\
&= 0.5 \times 0.5 + 0.8 \times 0.5 = 0.65,
\end{align*}
where $X$ denotes the selected coin (either F or B) and $Y$ denotes the observed outcome (either H or T).

Although this is quite a trivial example, we can still use it to show that using dynamic thresholds, such as quantiles determined in each sweep, may lead to a biased estimate of the marginal likelihood. Note that if the thresholds are constant in all sweeps, the marginal likelihood is unbiased as shown in the proof in Appendix A.

Let us employ a particle filter with rejection control (although there is only one time step) to estimate $p(Y=\text{H})$. We will use the median of the candidate weights after the first propagation of each particle (including the additional particle) as the threshold. To demonstrate the biasedness of the estimator, we consider a filter with only one particle (i.e., $N=1$). The estimator of the marginal likelihood is in this case given by:
\begin{equation*}
    \widehat Z = \frac{w^{(1)}}{P-1}.
\end{equation*}

There exist four possible states of the initial candidate particles (including the additional particle) after the propagation step:
\begin{center}
\begin{tabular}{lccrrrr}
Case & $x'^{(1)}$ & $x'^{(2)}$ & $w'^{(1)}$ & $w'^{(2)}$ & Median \\
\hline
1 & H & H & 0.8 & 0.8 & 0.8 \\
2 & T & T & 0.5 & 0.5 & 0.5 \\
3 & H & T & 0.8 & 0.5 & 0.65 \\
4 & T & H & 0.5 & 0.8 & 0.65 \\
\end{tabular}
\end{center}
Each of these cases is equally likely (the probability of each one being 0.25).

In the first case, both weights are equal to the threshold so both particles are accepted, $w^{(1)}=0.8, P=2$ and $\E[\widehat Z|\text{case 1}] = 0.8$.

Using a similar reasoning we get $\E[\widehat Z|\text{case 2}] = 0.5$ for the second case.

The remaining cases are more difficult. Let $p_F = 0.5/0.65$ denote the acceptance probability of a particle with weight $w'=0.5$, and $p_A = 0.5 \times 1 + 0.5 \times 0.5 / 0.65$ denote the acceptance probability of a restarted particle (its weight being irrelevant).

In the third case, the weight $w^{(1)}$ of the accepted particle is 0.8, but the number $P$ of propagations varies. The expected value of $\widehat Z$ is given by
\begin{align*}
\E[\widehat Z|\text{case 3}] &= p_F \frac{0.8}{2-1} + (1-p_F) \sum_{P=3}^\infty \frac{0.8}{P-1} (1-p_A)^{P-3} p_A \\
&= 0.8 \left(p_F + (1-p_F) \frac{p_A (p_A-\log(p_A)-1)}{(1-p_A)^2}\right) \approx 0.70392
\end{align*}

The fourth case is even more complicated, as we need to distinguish between the case where the weight $w^{(1)}$ of the accepted particle is 0.8 and the case where the weight is 0.65:
\begin{align*}
\E[\widehat Z|\text{case 4}] &= p_F \frac{0.65}{2-1} + (1-p_F) \sum_{P=3}^\infty \frac{0.8}{P-1} (1-p_A)^{P-3} 0.5 + (1-p_F) \sum_{P=3}^\infty \frac{0.65}{P-1} (1-p_A)^{P-3} 0.5 \frac{0.5}{0.65} \\
&= 0.65 \left(p_F + (1-p_F) \frac{p_A-\log(p_A)-1}{(1-p_A)^2}\right) \approx 0.58132
\end{align*}

Finally, we can show that the expected value of $\widehat Z$ is not equal to $p(Y=\text{H})$:
\begin{equation*}
\E[\widehat Z] = 0.25\,\E[\widehat Z|\text{case 1}] + 0.25\,\E[\widehat Z|\text{case 2}] + 0.25\,\E[\widehat Z|\text{case 3}] + 0.25\,\E[\widehat Z|\text{case 4}] \approx 0.64631 \ne 0.65.
\end{equation*}

\end{appendices}

\end{document}